\providecommand*{\shuffle}{%
  \mathbin{\mathpalette\shuffle@{}}%
}
\newcommand*{\shuffle@}[2]{%
  \sbox0{$#1\vcenter{}$}%
  \kern .15\ht0 
  \rlap{\vrule height .25\ht0 depth 0pt width 2.5\ht0}%
  \raise.1\ht0\hbox to 2.5\ht0{%
    \vrule height 1.75\ht0 depth -.1\ht0 width .17\ht0 %
    \hfill
    \vrule height 1.75\ht0 depth -.1\ht0 width .17\ht0 %
    \hfill
    \vrule height 1.75\ht0 depth -.1\ht0 width .17\ht0 %
  }%
  \kern .15\ht0 
}
\def\AmSLaTeX{\leavevmode\hbox{$\mathcal{A}\kern-.2em\lower.376ex
 \hbox{$\mathcal{M}$}\kern-.2em\mathcal{S}$-\LaTeX}}
\def\BibTeX{{B\kern-.05em{\textsc{i}\kern-.025em\textsc{b}}\kern-.08em
 \TeX}}
\newtheorem{theorem}{定理\label{universal}}
\begin{document}
\begin{CJK}{UTF8}{min}
  \title{Introduction to the signature method\\
    シグネチャ法入門}
\author{Nozomi Sugiura}


\begin{abstract}
    The sequential data observed in earth science can be regarded as paths in multidimensional space.
  To read the path effectively, it is useful to convert it
  into a sequence of numbers called the signature,
  which can faithfully describe the order of
  points and nonlinearity in the path.
  In particular, a linear combination of the terms in a signature can be used to approximate any nonlinear function defined on a set of paths.
  Thereby, when one 
  learns a set of sequential data with labels attached to it,
  linear regression can be applied to the pairs of
  signature and label, which will 
  achieve high performance learning even when the labels
  are determined by a nonlinear function.
  By incorporating the signature methods
  into machine learning and data assimilation
  utilizing sequential data, it is expected
  that we can extract information that has previously been overlooked.
\end{abstract}

\maketitle


\section{はじめに}
地球科学をはじめとする様々な実証的な科学研究においては，
系列データを分析することがしばしば必要になる．
ここで，系列データというのは，例えばあるパラメータに沿って値が定義された
多次元空間上の経路$X: [0,1] \ni t \mapsto X_t \in \mathbb{R}^d$
の上のいくつかの点$X_{t_1},\cdots,X_{t_m}$を観測したもののことである．
これらをばらばらの点と捉え，モデルと比較したり機械学習したりするのが素朴な扱い方であろう．
しかし，これらの点が順番に並んでいることに意味があることもありえる．
本稿では，順番の情報を失わないように，系列データを効率的に読み取る方法について述べる．
この方法においては，
経路上の点群として与えられる観測データを，
シグネチャ\citep[e.g.,][]{lyons2007differential,frizvictoir2010}と呼ばれる数列に
変換することが核心にあるため，シグネチャ法と呼ぶことにする．
シグネチャは，ラフパス理論\citep{lyons1998differential}という比較的
新しい数学理論における主要概念のひとつである．

系列データをシグネチャに変換することにより，
高性能かつ効率的な機械学習が可能になることが過去の研究において示されている．
例えば，\citet{FERMANIAN2021107148}には，
公開データMotion Sense \citep{malekzadeh2018protecting},
Urban Sound \citep{salamon2014dataset}を
対象とした機械学習において，
シグネチャ法が他の最先端手法に比して優位かまたは同程度の性能を
効率的に達成することが示されている．
一方，Quick Draw! \citep{quick}に関しては，極めて高い計算効率で
一定の性能が得られている．
\citet{10.1609/aaai.v33i01.33018585}には，
人間の動作把握への応用例において，
シグネチャ法が他手法よりも好成績を効率的に達成することが示されている．
その他にも，
医療データ \citep{WOS:000454240100001,9005805,WOS:000532262800030}，
文字認識 \citep{WOS:000437271100009}，
金融時系列 \citep{10.1145/2640087.2644157}，
地球科学 \citep{Sugiura2020}など多方面への応用例がある．

本稿では，シグネチャを使う理由を述べた後，シグネチャの定義と性質を示し，
最後に機械学習への応用について議論する．
説明にあたっては，原理的な側面が主になるものの，
数学的な厳密性よりも応用とのつながりを重視する．
\section{動機づけ}
系列データを扱う際になぜシグネチャが必要になるかを２つの例を挙げて説明する．
\subsection{多項式回帰\label{polynomial}}
点とそこでの評価値の集合$\{(x_i,y_i)|
~x_i,y_i \in \mathbb{R},~i=1,\cdots,N\}$
を考える．
このデータを多項式回帰するには，
次のコスト関数を最小化することで，多項式の係数$w=(w_0,\cdots,w_M)$を最適化すればいい
(図\ref{fig:reg})．
\begin{align}
  \label{reg0}
  J(w) &= \sum_{i=1}^{N}\left(
  y_i - \sum_{k=0}^M w_k x_i^k \right)^2. 
\end{align}
これと同様のことを経路の集合に対して行いたい．
経路とそれに対する評価値の集合$\{(X_i,y_i)|~y_i \in \mathbb{R},
~i=1,\cdots,N\}$が与えられているとする．
各経路$X_i$は区間$[0,1]$から$\mathbb{R}^d$への写像$X_i: t\mapsto X_{i,t}$である．
各経路をシグネチャに変換することにより，経路の集合に対する回帰が可能になる．
具体的には，次のコスト関数を最小化すればいい．
\begin{align}
  \label{reg1}
  J(w) &= \sum_{i=1}^{N}\left(
  y_i - \sum_{k=0}^M w_k \mathcal{S}^{(k)}(X_i) \right)^2. 
\end{align}
ここで$\mathcal{S}^{(k)}(X_i)$は経路$X_i$に対して定められたシグネチャと
呼ばれる数列の第$k$番目の項である．
ただし，以下ではシグネチャの各項$(k)$を多重添字で付番することがある．
このように各経路をあたかもひとつの点のように扱って「多項式」回帰することを可能にするのが，シグネチャ法の役目のひとつである．

点の集合に対する多項式回帰においては，単項式$1,x,\cdots,x^m$が回帰曲線(図\ref{fig:reg})の
基底関数であるのに対して，
経路の集合に対する回帰においては，シグネチャ$\mathcal{S}^{(0)}(X),\cdots,\mathcal{S}^{(M)}(X)$が回帰超曲面の基底関数となる．
つまり，図\ref{fig:reg}の横軸を，点の集合ではなく，
経路の集合に置き換えた場合に相当する．
\subsection{非可換テイラー展開\label{taylor}}
区間$[0,t]$から$d$次元ユークリッド空間$\mathbb{R}^d$への写像$X: u \mapsto X_u$を経路と呼ぶ．
$Y: [0,t]\to \mathbb{R}^e$は，$d$個の滑らかな
ベクトル場$V_i:  \mathbb{R}^e \to \mathbb{R}^e, ~i=1,\cdots,d$を用いた次の常微分方程式で定められるとする．
\begin{align}
\begin{bmatrix}
  dY^{(1)}_u\\
  \vdots\\
  dY^{(e)}_u
\end{bmatrix}
&=
\begin{bmatrix}
V_{1}^{(1)}(Y_u)&\cdots &V_{d}^{(1)}(Y_u)\\
\vdots    &       &\vdots    \\
V_{1}^{(e)}(Y_u)&\cdots &,V_{d}^{(e)}(Y_u)
\end{bmatrix}
\begin{bmatrix}
  dX^{(1)}_u\\
  \vdots\\
  dX^{(d)}_u
\end{bmatrix}.
  \label{ode1}
\end{align}
これは$dY_u = V(Y_u)dX_u$とも書ける．
常微分方程式(\ref{ode1})には，
再帰型ニューラルネット\citep{liao2019learning}や確率微分方程式など
多くの例が含まれる．
$Y$は$X$に駆動される制御型方程式になっている．

$Y_t$の滑らかな関数$F: \mathbb{R}^e \to \mathbb{R}$による評価値
$F(Y_t)$を$X$に沿ってテイラー展開する
\citep[e.g.,][]{Litterer2011OnAC,baudoin2012taylor}．テイラー展開は，微分積分学の基本定理の適用を繰り返すことによって導出することができる．
なお，以下で上付きと下付きの同一の添字の組に対しては，総和を施すものとする．

関数$F$の方向微分を
$\nabla_{V_i}F(Y_u):=V^{(k)}_i(Y_u)\frac{\partial}{\partial Y^{(k)}}F(Y_u)$
と書くと，
\begin{align}
dF_{u_1} &= \frac{\partial F}{\partial Y^{(k_1)}}(Y_{u_1})dY^{(k_1)}_{u_1}
=\nabla_{V_{i_1}}F(Y_{u_1})dX^{(i_1)}_{u_1}
\end{align}
なので， 微分積分学の基本定理より，
\begin{align}
  F(Y_t)
  &=
  F(Y_0)+
  \int_{0<u_1<t} \nabla_{V_{i_1}} F(Y_{u_1})
  dX^{(i_1)}_{u_1}.
  \label{taylor21a}
\end{align}
式(\ref{taylor21a})内の$\nabla_{V_{i_1}} F(Y_{u_1})$に微分積分学の基本定理を適用
して代入すると，
\begin{align}
  F(Y_t)
  &=
  F(Y_0)
  +
  \nabla_{V_{i_1}} F(Y_0)  \int_{0<u_1<t}  dX^{(i_1)}_{u_1}
  +
  \int_{0<u_1<u_2<t}
    \nabla_{V_{i_1}}\nabla_{V_{i_2}} F(Y_{u_1})  
    dX^{(i_1)}_{u_1}  dX^{(i_2)}_{u_2}.\label{taylor21c}
\end{align}
この操作を繰り返し適用することにより，
以下のように
$n$次の非可換テイラー展開が得られる\citep[e.g.,][]{10.1214/ECP.v20-4124}．
\begin{align}
  F(Y_t)
  &=  \sum_{k=0}^{n}
  \nabla_{V_{i_1}} \cdots \nabla_{V_{i_k}}  F(Y_0)
  \int_{0<u_1<\cdots<u_k<t}
  dX^{(i_1)}_{u_1} \cdots dX^{(i_k)}_{u_k}+R_{n+1}(t),
    \label{taylor22}\\
    R_{n+1}(t)
    &=
  \int_{0<u_1<\cdots<u_n<t}
  \left(    \nabla_{V_{i_1}} \cdots \nabla_{V_{i_n}}  F(Y_{u_1})-
    \nabla_{V_{i_1}} \cdots \nabla_{V_{i_n}}  F(Y_0)    \right)
  dX^{(i_1)}_{u_1} \cdots dX^{(i_n)}_{u_n}.
  \label{taylor22a}
\end{align}
式(\ref{taylor22})の２次以上の項においては，
方向微分が可換でないため，
通常のテイラー展開における$X_t^{(i)}-X_0^{(i)}$の冪のかわりに，
反復積分（後述）
$\int_{0<u_1<\cdots<u_k<t}
  dX^{(i_1)}_{u_1} \cdots dX^{(i_k)}_{u_k}$
が現れる．

式(\ref{taylor22})が示唆するのは，
システム$\{Y_0,F,V_1,\cdots,V_d\}$
が未知な場合，
  評価値$F(Y_t)$と
  経路の反復積分
  $\{\{ \int_{0<u_1<\cdots<u_k<t}dX^{(i_1)}_{u_1} \cdots dX^{(i_k)}_{u_k}
  \}_{i_1,\cdots i_k=1,\cdots,d} \}_{k=1,\cdots,n}$
  との組をデータサンプルとして持っていれば，
  重回帰によりシステム同定ができるということである．
  

\section{シグネチャの概要}
経路に対して定まるシグネチャという数列の定義と性質について述べる．
特に，シグネチャが経路を独立変数とする関数の空間における基底関数になっていることが重要である．
\subsection{シグネチャの定義}
$X: [0,1]\to \mathbb{R}^d$を経路とする．
また，区間$[0,1]$の分割を$D$として，経路の長さを
$\ell=\sup_{D}\sum_{[t_{i-1},t_i]\in D}|X_{t_i}-X_{t_{i-1}}|$で定義する．
この長さ$\ell$が有限であることを有界変動を持つという．
有界変動を持つ経路に対して，$1$次の反復積分を次のように定義する．$i_1=1,\cdots,d$に対して，
\begin{align}
  \mathcal{S}^{(i_1)}(X)
  &= \int_{0<t_1<1} dX^{(i_1)}_{t_1}=  X^{(i_1)}_1-X^{(i_1)}_0.
\end{align}
$2$次の反復積分は，$i_1,i_2=1,\cdots,d$に対して，
\begin{align}
  \mathcal{S}^{(i_1i_2)}(X) &=
  \int_{0<t_2<1} \mathcal{S}^{(i_1)}(X_{[0,t_2]}) dX^{(i_2)}_{t_2}
  =
  \int_{0<t_1<t_2<1} dX^{(i_1)}_{t_1} dX^{(i_2)}_{t_2}.
\end{align}
これを続けていくと，$n$次の反復積分は，$i_1,\cdots,i_n=1,\cdots,d$に対して，
\begin{align}
  \label{sig_n}
  \mathcal{S}^{(i_1\cdots i_n)}(X) &=
  \int_{0<t_n<1} \mathcal{S}^{(i_1\cdots i_{n-1})}(X_{[0,t_n]}) dX^{(i_n)}_{t_n}
  =
  \int_{0<t_1<\cdots < t_{n-1}<t_n<1}
  dX^{(i_1)}_{t_1}    \cdots  dX^{(i_{n-1})}_{t_{n-1}}dX^{(i_n)}_{t_n}.
\end{align}
$n$段シグネチャ$\mathcal{S}_n(X)$は，これらを並べて書いたものである．
\begin{align}
  \label{signature}
  \mathcal{S}_n(X) &= \left(  \mathcal{S}^{()}(X), 
  \mathcal{S}^{(\bullet)}(X),
  \mathcal{S}^{(\bullet \bullet)}(X),
  \cdots,
  \mathcal{S}^{(\overbrace{\bullet\cdots \bullet}^{n\text{個}})}(X)
  \right).
\end{align}
ここで，$\bullet$には，それぞれ$1,\cdots,d$が入る．
$0$次の反復積分$\mathcal{S}^{()}(X)$は，常に$1$と定義する．
なお，式(\ref{signature})で$n\to \infty$とした無限列のことを
$\mathcal{S}(X)$と書き，無限段のシグネチャ，または単にシグネチャと呼ぶ．

具体例として，$d=2$の場合の$2$段シグネチャを書くと，
\begin{align}
  \label{sig2}
  \mathcal{S}_2(X)
  &=
 \left(\mathcal{S}^{()}(X),
\begin{bmatrix}
  \mathcal{S}^{(1)}(X)\\
  \mathcal{S}^{(2)}(X)
\end{bmatrix},
  \begin{bmatrix}
  \mathcal{S}^{(11)}(X)&  \mathcal{S}^{(12)}(X)\\
  \mathcal{S}^{(21)}(X)&  \mathcal{S}^{(22)}(X)
\end{bmatrix}
  \right)\\
&=
\left(1,
  \begin{bmatrix}
    X^{(1)}_1-X^{(1)}_0\\
    X^{(2)}_1-X^{(2)}_0
\end{bmatrix},
  \begin{bmatrix}
\frac12 (X^{(1)}_{1}-X^{(1)}_0)^2&
\int_0^1(X^{(1)}_{t_2}-X^{(1)}_0)dX^{(2)}_{t_2}\\
\int_0^1(X^{(2)}_{t_2}-X^{(2)}_0)dX^{(1)}_{t_2}&
\frac12 (X^{(2)}_{1}-X^{(2)}_0)^2&
\end{bmatrix}
  \right).  \nonumber
\end{align}
式(\ref{sig2})の$2$次の反復積分のうち，
２つの非対角項の差（の２分の１）は
$1$次の反復積分$X^{(i)}_1-X^{(i)}_0$の冪では表わせないことに注意(L\'{e}vy面積とも呼ばれる)．
$n$段シグネチャの成分は，$1$も含めると，$1+d+d^2+\cdots+d^n=(d^{n+1}-1)/(d-1)$個ある．

より正確には，付録\ref{tensor}の式(\ref{sig_tensor})に示すように，
シグネチャはテンソル代数$T^n(\mathbb{R}^d)$の元とみなすことができる．

\subsection{経路とシグネチャとの対応}
区間$[0,1]$で定義された経路$X$を時刻$s$を境に２分割することを考える．
この節では，パラメータの範囲がわかるようにこの経路を$X_{[0,1]}$というようにも書く．
$2$階の反復積分の
第$(i_1,i_2)$成分
\begin{align}
\mathcal{S}^{(i_1i_2)}(X_{[0,1]})
&=\int_{0<t_1<t_2<1} dX_{t_1}^{(i_1)} dX_{t_2}^{(i_2)}
\end{align}
を考える．
$t_1$と$t_2$に関する積分範囲は$0<t_1<t_2<1$であるが，
任意の$0<s <1$を採ると，この範囲は，
\begin{align}
\{s<t_1<t_2<1\} \cup \{0<t_1<s<t_2<1\} \cup \{0<t_1<t_2<s\}&
\end{align}
と３つの互いに交わらない部分集合に分けることができる．これにより，
\begin{align}
\mathcal{S}^{(i_1i_2)}(X_{[0,1]})
&=\int_{s<t_1<t_2<1} dX_{t_1}^{(i_1)} dX_{t_2}^{(i_2)}
+\int_{0< t_1 < s < t_2 <1} dX_{t_1}^{(i_1)} dX_{t_2}^{(i_2)}
+\int_{0<t_1<t_2<s} dX_{t_1}^{(i_1)} dX_{t_2}^{(i_2)} \\
&=
\int_{s<t_1<t_2<1} dX_{t_1}^{(i_1)} dX_{t_2}^{(i_2)}+\int_{0< t_1 < s} dX_{t_1}^{(i_1)}
\int_{s<t_2<1}dX_{t_2}^{(i_2)}
+\int_{0<t_1<t_2<s} dX_{t_1}^{(i_1)} dX_{t_2}^{(i_2)}\nonumber\\
&=
\mathcal{S}^{(i_1i_2)}(X_{[s,1]})
+\mathcal{S}^{(i_1)}(X_{[0,s]})\mathcal{S}^{(i_2)}(X_{[s,1]})
+\mathcal{S}^{(i_1i_2)}(X_{[0,s]}).\nonumber
\end{align}
この式が示すのは，経路$X_{[0,s]}$と$X_{[s,1]}$とを繋げたときの反復積分の計算規則である．

１階の反復積分の計算に関しては，上記の第２項を考えなくていいので，
\begin{align}
\mathcal{S}^{(i_1)}(X_{[0,1]})&=
\mathcal{S}^{(i_1)}(X_{[0,s]})+\mathcal{S}^{(i_1)}(X_{[s,1]})
\end{align}
このような計算規則は，容易に高階の反復積分に拡張できる：
\begin{align}
  \label{chen0}
  \mathcal{S}^{(i_1\cdots i_n)}(X_{[0,1]})
  &=
  \sum_{k=0}^{n} \mathcal{S}^{(i_1\cdots i_k)}(X_{[0,s]})
  \mathcal{S}^{(i_{k+1}\cdots i_n)}(X_{[s,1]}).
\end{align}
これら反復積分の計算規則は，付録\ref{tensor}の
式(\ref{tensor_prod})のテンソル積と整合的であることがわかる．
すなわち，経路$X_{[0,s]}$と$X_{[s,1]}$とを繋げた経路を$X_{[0,s]}*X_{[s,1]}$と書くと，
\begin{align}
  \label{chen}
  \mathcal{S}(X_{[0,1]})&=  \mathcal{S}(X_{[0,s]}*X_{[s,1]}  )=
  \mathcal{S}(X_{[0,s]})\otimes \mathcal{S}(X_{[s,1]})
\end{align}
という式が成り立つ．
これがChenの恒等式であり，経路を繋げるという操作($*$)とシグネチャのテンソル積($\otimes$)
の間の関係（準同型）を表している．

さらに，シグネチャが経路を忠実に表現していることを示す次の事実がある
\citep{hambly2010uniqueness}．
２つの経路$X,Y$は，同じ経路を往復で辿るような枝状の部分(樹状経路)
の違いを除いて等しいとき，樹状同値であるといい，$X\sim Y$と書く．
有界変動を持つ２つの経路は，樹状同値であるとき，またそのときに限り，
（無限段の）シグネチャが等しい．すなわち，
\begin{align}
    X \sim Y&\iff \mathcal{S}(X)=\mathcal{S}(Y).
  \label{unique}
\end{align}

\subsection{シグネチャの成分どうしの積}
経路$X$を固定し，そのシグネチャの成分（反復積分）の間に成り立つ関係式を導出する．
$d$次元経路$X$のシグネチャの第$(i_1,i_2)$成分は，
\begin{align}
  \mathcal{S}^{(i_1i_2)}(X)&=\int_{0<t_1<t_2<1}dX_{t_1}^{(i_1)}dX_{t_2}^{(i_2)}
\end{align}
であるが，これと経路のシグネチャの第$(i_3,i_4)$成分との（実数どうしの）積は，
\begin{align}
  \mathcal{S}^{(i_1i_2)}(X)  \mathcal{S}^{(i_3i_4)}(X)
  &=
\left(  \int_{0<t_1<t_2<1}dX_{t_1}^{(i_1)}dX_{t_2}^{(i_2)} \right)
\left(  \int_{0<t_3<t_4<1}dX_{t_3}^{(i_3)}dX_{t_4}^{(i_4)} \right)
\end{align}
であるが，重積分の形で書くと，
\begin{align}
  \mathcal{S}^{(i_1i_2)}(X)  \mathcal{S}^{(i_3i_4)}(X)
  &=
  \int_{0<t_1<t_2<t_3<t_4<1}dX_{t_1}^{(i_1)}dX_{t_2}^{(i_2)}dX_{t_3}^{(i_3)}dX_{t_4}^{(i_4)}
  +\\
  &\cdots+
  \int_{0<t_3<t_4<t_1<t_2<1}dX_{t_1}^{(i_1)}dX_{t_2}^{(i_2)}dX_{t_3}^{(i_3)}dX_{t_4}^{(i_4)}.
  \nonumber
\end{align}
ここで，積分範囲は$t_1<t_2$および$t_3<t_4$の順序を保ちつつ，$t_1,t_2,t_3,t_4$を
並べ替えるすべての順列に亘る．
このような順列のとり方をシャッフル積といい，この場合は，
\begin{align}
12\shuffle 34 &= \{1234, 1324, 1342, 3124, 3142, 3412\}
\end{align}
となる．これを用いて積を書くと，
\begin{align}
  \mathcal{S}^{(i_1i_2)}(X)  \mathcal{S}^{(i_3i_4)}(X)
  &=
  \sum_{\sigma \in 12\shuffle 34}
  \int_{0<t_{\sigma(1)}<t_{\sigma(2)}<t_{\sigma(3)}<t_{\sigma(4)}<1}
  dX_{t_1}^{(i_1)}dX_{t_2}^{(i_2)}dX_{t_3}^{(i_3)}dX_{t_4}^{(i_4)}\\
  &=
  \sum_{\sigma \in 12\shuffle 34}
  \int_{0<t_{1}<t_{2}<t_{3}<t_{4}<1}
    dX_{t_{\sigma^{-1}(1)}}^{(i_1)}dX_{t_{\sigma^{-1}(2)}}^{(i_2)}
    dX_{t_{\sigma^{-1}(3)}}^{(i_3)}dX_{t_{\sigma^{-1}(4)}}^{(i_4)}\nonumber\\
  &=
  \sum_{\sigma \in 12\shuffle 34}
  \int_{0<t_{1}<t_{2}<t_{3}<t_{4}<1}
  dX_{t_1}^{(i_{\sigma(1)})}
  dX_{t_2}^{(i_{\sigma(2)})}
  dX_{t_3}^{(i_{\sigma(3)})}
  dX_{t_4}^{(i_{\sigma(4)})}\nonumber\\
  &=
  \sum_{\sigma \in 12\shuffle34}
  \mathcal{S}^{(i_{\sigma(1)}i_{\sigma(2)}i_{\sigma(3)}i_{\sigma(4)})}(X).\nonumber
\end{align}
最後の式は，$\mathcal{S}^{(i_1i_2\shuffle i_3i_4)}(X)$と略記される．
すなわち，
\begin{align}
\label{shuffle2a}
  \mathcal{S}^{(i_1i_2)}(X)  \mathcal{S}^{(i_3i_4)}(X)
  &=\mathcal{S}^{(i_1i_2\shuffle i_3i_4)}(X).
  \end{align}
式(\ref{shuffle2a})は，任意の多重添字$I,J$に拡張できて，
\begin{align}
\label{shuffle}
  \mathcal{S}^{(I)}(X)  \mathcal{S}^{(J)}(X)
  &=\mathcal{S}^{(I\shuffle J)}(X).
\end{align}
つまり，２つの反復積分の積は，より高次の反復積分の和で書き表せることがわかる．

さらに，経路$X$に対する
反復積分の線形結合$f(X)=\sum_I v_I\mathcal{S}^{(I)}(X)$と
$g(X)=\sum_J w_J\mathcal{S}^{(J)}(X)$との積は，
式(\ref{shuffle})より，
\begin{align}
f(X) g(X)&=\sum_{I,J} v_I w_J\mathcal{S}^{(I\shuffle J)}(X)
\label{shuffle_fg}
\end{align}
となるから，やはり反復積分の線形結合になっている．
すなわち，反復積分の線形結合は積に関して閉じており，
このことが以下の近似定理につながる．
\subsection{経路の関数に対する近似定理\label{approx_theory}}
有界変動を持つ経路の集合上で定義された
連続関数は，常に反復積分の線形結合で一様に近似できる．
このことを，経路の関数に対する普遍近似定理として以下に述べる．
テンソル代数
$T^r(\mathbb{R}^d)^*$や
演算$\left<\bullet,\bullet \right>$の定義は，付録\ref{tensor}の
式(\ref{tensor_algebra}), (\ref{pairing})を参照のこと．
また，経路は樹状同値類として解釈する（式(\ref{unique})の上の説明参照）．
        \begin{theorem}[\cite{Litterer2011OnAC,levin2013learning,JMLR:v20:16-314}]\label{th3}
          有界変動を持つ経路の集合$BV([0,1],\mathbb{R}^d)$のコンパクト部分集合$K$上の連続関数$f\in C(K,\mathbb{R})$と$\epsilon>0$に対して，ある$0$以上の整数$r$と
          $w \in T^r(\mathbb{R}^d)^*$があって，
  \begin{align}
    \label{approx}
    \sup_{X\in K} \left| f(X)-\left< w,\mathcal{S}(X) \right>\right|&<\epsilon
    \end{align}
とすることができる．
\end{theorem}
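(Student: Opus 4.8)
The plan is to recognize the statement as an instance of the Stone--Weierstrass theorem, applied to the family of ``linear functionals of the signature''. Set
\[
  \mathcal{A} := \left\{\, X \mapsto \left< w, \mathcal{S}(X) \right> ~:~ r \ge 0,\ w \in T^r(\mathbb{R}^d)^* \,\right\},
\]
each element of which is, by definition of the pairing, a finite linear combination of iterated integrals $\mathcal{S}^{(I)}(X)$ with $|I|\le r$. The first task is to check that $\mathcal{A}$ consists of genuine continuous real-valued functions on $K$: this is where one fixes on $BV([0,1],\mathbb{R}^d)$ the topology (for instance that of $1$-variation) under which the signature map $X \mapsto \mathcal{S}(X)$ is continuous, so that $\mathcal{A} \subseteq C(K,\mathbb{R})$ and so that the notion of a compact subset $K$ is the intended one. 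Once this is settled, it remains to verify the three hypotheses of Stone--Weierstrass for $\mathcal{A} \subseteq C(K,\mathbb{R})$: it is a subalgebra, it contains the constants, and it separates points.

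The algebraic hypotheses follow immediately from what has already been established. $\mathcal{A}$ is clearly a linear subspace, and it contains every constant because the empty-word component satisfies $\mathcal{S}^{()}(X)\equiv 1$. Closure under pointwise multiplication is exactly the shuffle relation: by (\ref{shuffle_fg}), the product of $f(X)=\sum_I v_I \mathcal{S}^{(I)}(X)$ and $g(X)=\sum_J w_J \mathcal{S}^{(J)}(X)$ equals $\sum_{I,J} v_I w_J \mathcal{S}^{(I\shuffle J)}(X)$, again a finite linear combination of iterated integrals, hence again an element of $\mathcal{A}$. Note that the truncation level needed for the product exceeds those of $f$ and $g$; this causes no difficulty, since in the theorem the integer $r$ is permitted to depend on $f$ and $\epsilon$.

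For point separation, take $X,Y\in K$ that are distinct as tree-like equivalence classes, i.e.\ $X\not\sim Y$. The uniqueness theorem (\ref{unique}) then gives $\mathcal{S}(X)\ne\mathcal{S}(Y)$, so some coordinate, say the $\mathcal{S}^{(I)}$-coordinate, takes different values at $X$ and $Y$; taking $w$ to be the dual basis element of that coordinate (with $r=|I|$) yields $\left< w,\mathcal{S}(X)\right> \ne \left< w,\mathcal{S}(Y)\right>$, so $\mathcal{A}$ separates points. Stone--Weierstrass now shows that $\mathcal{A}$ is uniformly dense in $C(K,\mathbb{R})$, which is precisely (\ref{approx}). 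Of the steps above, the algebraic part and point separation are formal consequences of the shuffle identity and the cited uniqueness theorem; the one genuinely delicate point, and the only place real analysis enters, is arranging the topology on $BV([0,1],\mathbb{R}^d)$ so that the iterated-integral functionals are continuous on the compact set $K$ --- with the $1$-variation topology this is standard (the iterated integrals obey factorial-type Lipschitz estimates on bounded sets), and the remainder of the argument is then purely algebraic.
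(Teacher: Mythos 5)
Your proposal is correct and follows essentially the same route as the paper: both define the algebra of linear functionals of the signature, verify it is a point-separating subalgebra of $C(K,\mathbb{R})$ containing the constants (using the shuffle identity (\ref{shuffle_fg}) for closure under products and the uniqueness result (\ref{unique}) for separation of tree-like equivalence classes), and conclude by Stone--Weierstrass. Your explicit remarks on the growth of the truncation level under shuffle products and on the choice of the $1$-variation topology making $X\mapsto\mathcal{S}_r(X)$ continuous are exactly the points the paper handles by allowing $r$ to depend on $f$ and $\epsilon$ and by citing Prop.\,7.15 of \citet{frizvictoir2010}.
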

\begin{proof}
$K$に属する経路にシグネチャの線形結合を割り当てる関数の集合
\begin{align}
  A&=\left\{f:  K\ni X\mapsto \left< w,\mathcal{S}(X)\right> \in \mathbb{R}
  \middle|\exists r\geq 0,~ w \in T^r(\mathbb{R}^d)^*\right\}
  \end{align}
は，$K$の各点を分離する多元環となっている．すなわち，
以下の性質を持つ．
\begin{enumerate}
\item $r$段シグネチャへの変換$K \ni X\mapsto \mathcal{S}_r(X)$は連続関数なので
   \citep[Prop.\,7.15 of][]{frizvictoir2010}，$A\subset  C(K,\mathbb{R})$.
\item 関数の和とスカラー倍に関して閉じている: $f,g\in A, \lambda,\mu \in \mathbb{R} \implies
  \lambda f + \mu g \in A.$
\item シャッフル積の性質(\ref{shuffle_fg})
  により，$A$は関数の積に関して閉じている: $f,g\in A \implies f g \in A.$
  
\item $A$は定数関数$1$を含む．
\item シグネチャの一意性(\ref{unique})より，$A$は$K$の点を分離する．
  実際，$X\nsim Y$ならばある多重添字$I$に対するシグネチャの成分が異なる.
  このとき，
  関数$X \mapsto \mathcal{S}^{(I)}(X)$は$A$に属し$X$と$Y$とを分離する．
  \end{enumerate}
従って，ストーン＝ワイエルシュトラスの定理\citep{stone1937applications}より，
$A$は$C(K,\mathbb{R})$内で稠密である：$\overline{A}=C(K,\mathbb{R})$．
\end{proof}

定理\ref{th3}は，
経路の連続関数が$f(X)\approx \sum_I w_I \mathcal{S}^{(I)}(X)$というように
多重添字に亘る反復積分の線形結合によって任意の精度で近似できることを示しているので，
シグネチャの各項$\{\mathcal{S}^{(I)}(X)\}_{I\in \text{(multiindex)}}$が
基底関数の役割を果たしていることがわかる．
これは，\ref{polynomial}項で述べた多項式近似において，
単項式$\{1,x,x^2,\cdots\}$が１変数関数に対して果たす役割と同等である．
  
\section{シグネチャを用いた機械学習}
シグネチャを機械学習に応用する方法について述べ，実データへの適用例を示す．
\subsection{シグネチャの計算}
経路$X:[0,1]\to \mathbb{R}^d$の定義域
$[0,1]$の分割$0=t_0,t_1,\cdots,t_m=1$を採ると，
折れ線（区分的に線形な経路）は，各区分において
\begin{align}
(X_j)_t &= X_{t_{j-1}}+\frac{t-t_{j-1}}{t_{j}-t_{j-1}}\left(X_{t_{j}}-X_{t_{j-1}}\right),
\quad t \in [t_{j-1},t_{j}]
\end{align}
と定義される．
シグネチャを計算するにあたっては，
折れ線に対するシグネチャを考えるだけで充分であることが，
Chow–-Rashevskiiの定理 \citep[Theorem\,7.28 of][]{frizvictoir2010}
により保証されている．
また，実際的な面でも，系列データは
観測点での値を線形に結んだ折れ線とみなすことができることが多い．

折れ線に対する$n$段シグネチャは，次のように計算することができる．
ベクトル$v=(v^{(1)},\cdots,v^{(d)})\in\mathbb{R}^d$に対して，
線分$X: [0,1]\ni t \mapsto vt \in \mathbb{R}^d$を考える．
$n$次の反復積分を式(\ref{sig_n})に従って計算すると，
\begin{align}
  \mathcal{S}^{(i_1\cdots i_n)}(X)
  &=
  \frac1{n!} v^{(i_1)}v^{(i_2)}\cdots v^{(i_n)}.
\end{align}
そして，$m$個の線分$X_1,\cdots,X_m$ 
をつなげた
折れ線$X_1*\cdots*X_m$ 
に対する反復積分は，
Chenの恒等式(\ref{chen})より，$\mu=2,\cdots,m$に対して
\begin{align}
\label{chen2}
\mathcal{S}_n(X_1*\cdots*X_{\mu})&=
\mathcal{S}_n(X_1*\cdots*X_{\mu-1})\otimes\mathcal{S}_n(X_{\mu})
\end{align}
と順次計算することができる．
なお，$n$次の反復積分の計算にはそれより低次の反復積分の情報を用いるので，
各$\mu$において，$1,\cdots,n$次の反復積分に対して(\ref{chen0})を順番に計算していく．
シグネチャの計算を行うためのPythonライブラリとして，$\mathtt{esig}$\citep{esig}，
$\mathtt{iisignature}$\citep{WOS:000582337500008}などがある．
\subsection{打ち切り誤差\label{error_estimate}}
シグネチャの段数を打ち切った時の非可換テイラー展開（\ref{taylor}項）
の近似精度を調べる．

まず，$k$次の反復積分のノルム（付録\ref{tensor}の式(\ref{norm1})）
は以下を満たす\citep[Prop.\,2.2 of][]{lyons2007differential}．
$\ell$を経路$X:[0,t]\to \mathbb{R}^d$の長さとして，
\begin{align}
  \label{bound}
  \left|\int_{0<u_1<\cdots<u_k<t} dX_{u_1}\otimes   \cdots  \otimes dX_{u_k}
  \right|_{\otimes k}
    &=
  \left|
  \int_{0<u_1<\cdots<u_k<t}
  \dot{X}_{u_1}\otimes   \cdots  \otimes \dot{X}_{u_k}
  du_1\cdots du_k\right|_{\otimes k}\\
    &\leq
    \int_{0<u_1<\cdots<u_k<t} |\dot{X}_{u_1}| \cdots  
    |\dot{X}_{u_k}|du_1\cdots du_k\nonumber\\
    &=
    \int_{0<u_1<\cdots<u_k<t} v^k du_1\cdots du_k\nonumber\\
    &=\frac{\ell^k}{k!}.\nonumber
  \end{align}
2行目において，付録\ref{tensor}の式(\ref{prop_norm1})を用いた．また，
3行目において，経路$X$はほとんど至るところ微分可能で一定速度$v=\ell/t$を持つとした．
これは時間パラメータの付け替えにより常に可能である
\citep[Prop.\,2.2 of][]{lyons2007differential}．

    
評価値$y\in \mathbb{R}$がある滑らかな関数$F$を用いて$y=F(Y_t)$と表され，さらに
$Y: [0,t]\to \mathbb{R}^e$が$X$に駆動される微分方程式(\ref{ode1})を満たす場合には，
\ref{taylor}項で述べたように，形式的には$y$を
テイラー展開(\ref{taylor22})で表すことができる．
このとき，$n$項までのテイラー展開の剰余は式(\ref{taylor22a})で与えられるが，
この大きさを以下に評価する．
$\mathbb{R}^d$から$\mathbb{R}^e$への
線形変換（行列）の空間を$L(\mathbb{R}^d,\mathbb{R}^e)$と
書くことにし，
滑らかな関数
$F:\mathbb{R}^e\to \mathbb{R}$と
滑らかな変換場
$$V :\mathbb{R}^e\to L(\mathbb{R}^d,\mathbb{R}^e),\quad
y\mapsto
\begin{bmatrix}
V_{1}^{(1)}(y)&\cdots &V_{d}^{(1)}(y)\\
\vdots    &       &\vdots    \\
V_{1}^{(e)}(y)&\cdots &,V_{d}^{(e)(y)}
\end{bmatrix}$$
  に対して，
  演算子$\nabla_V^{\circ n}F: \mathbb{R}^e \to
  L((\mathbb{R}^d)^{\otimes n},\mathbb{R})$を次のように帰納的に定義する．
  \begin{align}
    \nabla_V^{\circ 0}F  &= F,\quad
    \nabla_V^{\circ (k+1)}F = \sum_{j=1}^{e}V^{(j)}\partial_j 
    (\nabla_V^{\circ k}F).
    \end{align}
また，演算子のノルムを次のように定義する．
  \begin{align}
    \left\|    \nabla_V^{\circ n}F  \right\|_{\infty}
    &=\sup_{y\in \mathbb{R}^e} \left| \nabla_{V}^{\circ n}F(y)  \right|,\quad
    \left| \nabla_{V}^{\circ n}F(y)  \right|
    =
    \sup_{x\in (\mathbb{R}^d)^{\otimes n}}\frac{\left|
      \nabla_{V}^{\circ n}F(y)x \right|}{|x|_{\otimes n}}.
    \end{align}
  このとき，
  剰余項(\ref{taylor22a})に対して次の評価が成り立つ
  (\citet{10.1214/ECP.v20-4124}の式(1.5)参照)．
\begin{align}
  |R_{n+1}(t)|
  &=
  \left|
  \int_{0<u_1<\cdots<u_n<t}
  \left(  \nabla_{V}^{\circ n} F(Y_{u_1})-
  \nabla_{V}^{\circ n} F(Y_{0})\right)
    dX_{u_1}\otimes \cdots  \otimes dX_{u_n}
    \right|
    \label{error}\\
  &=
  \left|
  \int_{0<u_0<\cdots<u_n<t}
    \nabla_{V}^{\circ (n+1)} F(Y_{u_0})
    \dot{X}_{u_0}\otimes    \cdots  \otimes \dot{X}_{u_n}
    du_0\cdots du_n
    \right|\nonumber\\
  &\leq
  \int_{0<u_0<\cdots<u_n<t}
  \left|
    \nabla_{V}^{\circ (n+1)} F(Y_{u_0})
    \dot{X}_{u_0}\otimes    \cdots  \otimes \dot{X}_{u_n}
    \right|
    du_0\cdots du_n 
    \nonumber\\
  &\leq
  \int_{0<u_0<\cdots<u_n<t}
  \left|    \nabla_{V}^{\circ (n+1)} F(Y_{u_0})   \right|
    v^{n+1} du_0\cdots du_n 
    \nonumber\\
    &\leq
    \left\|    \nabla_{V}^{\circ (n+1)} F   \right\|_{\infty} v^{n+1}
    \int_{0<u_0<\cdots<u_n<t}
    du_0\cdots du_n 
    \nonumber\\
   &=
    \frac{\ell^{n+1}}{(n+1)!}
    \left\|    \nabla_{V}^{\circ (n+1)} F \right\|_{\infty}.\nonumber
\end{align}
4行目において，式(\ref{bound})と同様に経路$X$はほとんど
至るところ微分可能で一定速度$v=\ell/t$を持つとした．

  まとめると，
  滑らかなベクトル場の集合$V$と滑らかな関数$F$が
    $\|    \nabla_{V}^{\circ n} F    \|_{\infty}=o(n!/\ell^n)$
  を満たす時，テイラー展開は収束し，$n$項までの展開の
  誤差は (\ref{error})で抑えられる．

\subsection{線型回帰}
\ref{approx_theory}項で述べたように，シグネチャを用いて経路に対する非線形関数を近似することができる．
このことを利用して，実際の系列データに対して機械学習を適用することを考える．
普遍近似定理によれば，$d$次元経路$X$の集合$K$に対して定義される
実数値連続関数$f$は，十分大きな段数$r$のシグネチャを
線型変換したものと実際上みなしてよい．従って，
実データ$y \in \mathbb{R}$は，これに独立同分布に従うノイズを加えた
以下のシステムから生成されているとする：
\begin{align}
  \label{truesystem}
  y &= \left<w_{\text{true}},\mathcal{S}_r(X)\right>+\zeta,\\
  \mathbf{E}[\zeta|X] &=0,\quad \mathbf{E}[\zeta^2|X]=\sigma^2,\quad
  w_{\text{true}} \in T^r(\mathbb{R}^d)^*.
  \nonumber 
\end{align}
ここで，$\mathbf{E}[\zeta^2|X]=\sigma^2$は重回帰式のの残差が
有限の分散を持つことを表している．
一方，このシステムに対するモデルは，段数を$n<r$に減らし，
ノイズレベルを$\Sigma \geq \sigma$に上げて，
\begin{align}
  \label{model}
  y &= \left<w,\mathcal{S}_n(X)\right>+\xi,\\
  \mathbf{E}[\xi|X] &=0,\quad \mathbf{E}[\xi^2|X]=\Sigma^2,\quad
  w \in T^n(\mathbb{R}^d)^*
  \nonumber
\end{align}
とする．ここで，重み$w$を列ベクトル
$\mathbf{w}\in \mathbb{R}^{M_n\times 1}, ~M_n=(d^{n+1}-1)/(d-1)$
とみなす．
また，訓練データセットとして，系列データ$X_i$のシグネチャと評価値$y_i$との組が
\begin{align}
  \mathcal{D} &=\left\{(\mathcal{S}_n(X_i),y_i)|  
  ~i=1,\cdots,N   \right\}, \quad
\end{align}
と与えられているとして，計画行列
$\mathbf{X}\in \mathbb{R}^{N \times M_n}$と
列ベクトル$\mathbf{y}\in \mathbb{R}^{N \times 1}$の成分を
次のように定義する．
\begin{align}
  (\mathbf{X})_{ik} &= \mathcal{S}_r^{(k)}(X_i),\quad (\mathbf{y})_i=y_i,
  \quad 1\leq i \leq N,~1\leq k \leq M_n.
\end{align}
ここでの付番$(k)$は，多重添字の通し番号とする．
ガウス＝マルコフの定理\citep[e.g.,][]{10.1093/biomet/36.3-4.458}より，
このデータ$\mathcal{D}$のもとで最適な重みは，コスト関数：
\begin{align}
  \label{cost}
  J(\mathbf{w}) &= 
  \left( \mathbf{X} \mathbf{w} -\mathbf{y} \right)^T
  \left( \mathbf{X} \mathbf{w} -\mathbf{y} \right)
\end{align}
を最小化することによって得られる．特に，$N>M_n$で，
列ベクトル$x^{(1)},\cdots,x^{(M_n)}$が線型独立のとき，
この訓練データセット$\mathcal{D}$に基づく最適な重み$\mathbf{w}_{\mathcal{D}}$は，
\begin{align}
  \mathbf{w}_{\mathcal{D}} &= \left(\mathbf{X}^T \mathbf{X}\right)^{-1}
  \mathbf{X}^T\mathbf{y}
\end{align}
となるが，この重みを用いた$\mathbf{y}$の予測値は，
\begin{align}
  \mathbf{X}\mathbf{w}_{\mathcal{D}}&=\mathbf{P}\mathbf{y},\quad
  \mathbf{P} :=
  \mathbf{X}\left(\mathbf{X}^T \mathbf{X}\right)^{-1}  \mathbf{X}^T
\end{align}
と書ける．ここで，$\mathbf{P}\in \mathbb{R}^{N\times N}$
は射影行列または影響行列と呼ばれ，
\begin{align}
  \label{influence}
  \mathbf{P}^2 &=\mathbf{P},\quad
  \mathbf{P}^T =\mathbf{P},\quad
  \mathrm{tr}(\mathbf{P})=M_n
\end{align}
などの性質を持つ\citep[e.g.,][]{cardinali2004influence}．
ここで，$\mathrm{tr}$は対角成分の和．

システム(\ref{truesystem})が生成したデータ$y$を
をモデル(\ref{model})で予測する際の二乗誤差は，
ノイズを変化させたときの期待値を$\mathbf{E}_{\zeta}$，
観測データセットを変化させたときの期待値を$\mathbf{E}_{\mathcal{D}}$，
系列データに亘る期待値を$\mathbf{E}_X$として，
\begin{align}
  \label{bias-variance}
&  \mathbf{E}_{\zeta,\mathcal{D},X}  
  \left[ \left(  y-\left<w_{\mathcal{D}},\mathcal{S}_n(X)\right>  \right)^2\right]
  \\
  &=
  \mathbf{E}_{\zeta}\left[\left(y-\left<w_{\text{true}},\mathcal{S}_r(X)\right>\right)^2\right]
  +
  \mathbf{E}_{X}\left[
    \left(\left<w_{\text{true}},\mathcal{S}_r(X)\right>
    -\left<\bar{w},\mathcal{S}_n(X)\right>\right)^2\right]
  +
  \mathbf{E}_{\mathcal{D},X} \left[
    \left<w_{\mathcal{D}}-\bar{w}    ,\mathcal{S}_n(X)\right>^2    \right]\nonumber\\
  &\leq
  \mathbf{E}_{\zeta}\left[\zeta^2   \right]
  +
  \mathbf{E}_{X}\left[\left<\pi_{>n}(w_{\text{true}}),\mathcal{S}_r(X)\right>^2\right]
  +
  \mathbf{E}_{\mathcal{D}} \left[\frac1N \Xi^T\mathbf{P} \Xi   \right]
  \nonumber \\
  &\leq  \sigma^2+  
  \left(
\frac{\ell^{n+1}}{(n+1)!}\left\|\nabla^{\circ (n+1)}_V F \right\|_{\infty}
   \right)^2
  +\frac{M_n}{N}\Sigma^2 \nonumber
\end{align}
を満たす．ここで，
$\bar{w}=  \mathbf{E}_{\mathcal{D}}\left[w_{\mathcal{D}}\right]$，
$\Xi$は平均$0$で分散$\Sigma^2$の
独立同分布確率変数$N$個からなる列ベクトル，
$\ell$は$K$内の経路の長さの上限．
なお，右辺各項の評価においては，
式(\ref{truesystem})，(\ref{error})，および(\ref{influence})をそれぞれ用いた．
右辺の各項は，順に雑音（ノイズ），偏り（バイアス），分散（バリアンス）と呼ばれる
\citep[e.g.,][]{bishop2006,hastie_09_elements-of.statistical-learning,Mehta2019}．

サンプル数$N$が少ないときに
シグネチャの段数$n$を増やすと，$\Sigma$はやや減るものの$M_n$が大きく増え，
分散が大きくなる場合がある（過剰適合）．一方，$n$を減らすと偏りが大きくなる（過少適合）．
このようなバランスを考慮して，２乗誤差を小さくする$n$を決める必要がある．
これが，この問題に対する偏りと分散のトレードオフである．
過剰適合を回避するには，シグネチャの段数を減らす他にも，
後述のように，
コスト関数(\ref{cost})に$L_1$罰則項をつけることによって重要度の低い重み成分を
除外する方法がある\citep{10.2307/2346178}．

\subsection{実データへの適用例}
全球海洋観測プロファイルArgo\citep{doi:10.1029/2004EO190002}の品質管理に適用した例が，\citet{Sugiura2020}に示されている．

Argoフロートは，約$2000\unit{m}$の海中から海面まで浮上してゆき，
その間に圧力(P)，塩分(S)，水温(T)を観測する．
得られた観測プロファイルは，$(P,S,T)$座標を持つ
$\mathbb{R}^3$内の経路とみなすことができる．
ここでは，機械学習の精度を向上させるため，遅れ座標を導入しLead-lag変換
  \citep[e.g.,][]{chevyrev2016primer,FERMANIAN2021107148}という処理を施して，
  $\mathbb{R}^6$内の経路とみなす$(d=6)$．
各経路$X$には，評価値$y=0$または$1$が付与されているものとする．
$y=1$は品質管理を合格したという意味であり，$y=0$は不合格という意味である．
経路は，$n=6$段のシグネチャに変換してから用いる．
データセットの任意の4割を訓練データセットとして，残りの6割を検証データセットとする．
そして回帰における重みを訓練データセットから計算し，その重みを使って検証データセットに属する
プロファイル$X$に対する$y$を予測できるかどうかを調べる．

まず，訓練データセット$\{(\mathcal{S}_n(X_i),y_i)|i=1,\cdots,N\}$に対して，
式(\ref{cost})に$L_1$罰則項を加えた次のコスト関数を最小化する重み$w_*$を求める．
\begin{align}
  \label{cost2}
  J(w) &=
  \sum_{i=1}^N \left( \left<w, \mathcal{S}_n(X_i)\right>  -y_i \right)^2
  +
  \alpha \left\|w\right\|_1.
\end{align}
ここで，$\|\bullet \|_1$は多重添字に亘る絶対値の和を表す．

次に，検証データ$X$に対する予測値$\widetilde{y}$は，次式で求められる．
\begin{align}
  \label{pred2}
  \widetilde{y} &= \left<w_*, \mathcal{S}_n(X)\right>.
\end{align}
なお，式(\ref{cost2})内の$\alpha>0$は，検証データに対する予測(\ref{pred2})の成績がよくなるように調整する．
これは，式(\ref{bias-variance})の誤差を小さくするような$w$の自由度を探していることに相当する．
実際に採用されたのは，サンプル数$N=3.2\times 10^4$に対して自由度$6.7\times 10^3$である．

Argoプロファイルと対応するシグネチャの例を図\ref{fig:argo}に示す．
また，図\ref{fig:hist}は訓練データ(左)と検証データ(右)に対する
予測結果をヒストグラムで表したものである．
少なくとも，しきい値$0.5$よりも低い推定値のデータに，
合格プロファイルが混入することはほとんどないとは言える．
一方，同じしきい値で不合格プロファイルの数割は検知することができることもわかる．
論文内では，シグネチャ法を用いた方が対照ケースに比して成績が良いことが示されている．
\section{まとめ}
シグネチャは経路の連続関数の空間の基底関数であり，
系列データとその評価値との間の非線形な関係を表すのに極めて有効である．
このことを使って，観測された系列データに対する機械学習を行うことができる．
典型的な適用例として，従来の多項式回帰をシグネチャを基底関数とする
回帰に置き換えて，系列データに対する教師あり学習を行うことができる．
この手順を海洋観測プロファイルとその品質管理フラグに適用した例を示した．

シグネチャ法に関しては，この他にもいろいろな応用の可能性が考えられる．
\begin{itemize}
\item   一般に時系列解析は，多次元の経路を読み取ることに帰着するので，
  時系列の断片を経路と見てシグネチャに変換することで，同様の回帰を行うことができる．
  特に，将来における何らかの値を経路に対する評価値とみなせば，
  将来予測が可能である\citep[e.g.,][]{sugiura2021simple}．
  この手法においては，非線形状態空間モデルを経路自体が持っている非線形性（シグネチャ）
  と線型状態空間モデル（重み）とに分離するため，
  回帰が極めて簡単になる．
\item  また，データ同化においても，観測されたプロファイルとモデル内のプロファイルを
  比較する必要が生じるが，そのような場合に
  両者のシグネチャの差を縮めるというようなコスト関数を設定することにより，
  品質の向上が期待できる．
  なぜなら，経路上の各点が近いという線型の問題設定から，
  経路の性質が近いという非線形の問題設定へと自然に転換することができるからである．
\end{itemize}

\section*{謝辞}
貴重な指摘をいただいた査読者に感謝の意を表する．
本研究は，日独仏AI研究の研究課題「強化型データストリーム解析：ラフパス理論と機械学習アルゴリズムの融合」（JST-PROJECT-20218919, 期間：2020年12月--2024年3月）の一環として行われた．
\appendix
\section{テンソル表記\label{tensor}}
以下のようにテンソルの概念を導入すると，
シグネチャをテンソル代数の元として表すことができる．

\subsection{テンソル代数\label{app-a1}}
実ベクトル$u,v \in \mathbb{R}^d$に対して，テンソル積$u \otimes v$は，双対ベクトル空間の元$f,g$に対して，
\begin{align}
  \label{dual}
  (u \otimes v)(f,g) \mapsto \left< u,f \right> \left< v,g \right> &\in \mathbb{R}
\end{align}
なる実数を与える双線形写像である．ここで，$\left<\bullet,\bullet\right>$はduality pairingである．
座標表示すると，$u=(u_1,\cdots,u_d),~v=(v_1,\cdots,v_d),~f=(f^1,\cdots,f^d)$
に対して，$\left<u,f\right>=\sum_i u_i f^i$であり，
$u\otimes v \in \mathbb{R}^{d\times d}$はランク１行列
$(u_i v_j)_{i,j=1,\cdots,d}$を成分として持つ．
また，テンソル積$\otimes$の演算自体も双線形である．すなわち，
\begin{align}
  (a u + b v)\otimes  w &=
  a (u \otimes  w)+  b (v \otimes  w),\\
w\otimes   (a u + b v) &=
  a (w \otimes  u)+  b (w \otimes  v).
\end{align}
ここで，$u,v,w \in \mathbb{R}^d,~ a,b \in \mathbb{R}$．
なお，テンソル積$u\otimes v$と$v\otimes u$とは一般に等しくない（可換でない）．

ベクトル空間$\mathbb{R}^d$の基底を$\{e_1,\cdots,e_d\}$とし，テンソル積で$n$個のベクトルのテンソル積の線形結合の全体($n$階テンソルともいう)を
$(\mathbb{R}^d)^{\otimes n}$とかく．すなわち，
\begin{align}
  (\mathbb{R}^d)^{\otimes n}&=\left\{
  \sum_{i_1,\cdots, i_n=1,\cdots,d} g^{(i_1\cdots i_n)}
  e_{i_1}\otimes \cdots \otimes e_{i_n} \middle|
  g^{(i_1\cdots i_n)}\in \mathbb{R}
  \right\}
\end{align}
さらに，$k=0,1,\cdots,n$階のテンソルの直和
\begin{align}
  T^n(\mathbb{R}^d):=\bigoplus_{k=0}^n(\mathbb{R}^d)^{\otimes k}
  &=\left\{
  \sum_{k=0}^n
  \sum_{i_1,\cdots, i_{k}=1,\cdots,d}
  g^{(i_1\cdots i_{k})}
  e_{i_1}\otimes \cdots \otimes e_{i_{k}}
  \middle|
  g^{(i_1\cdots i_{k})}\in \mathbb{R}
  \right\}\label{tensor_algebra}
\end{align}
をテンソル代数という．

\subsection{テンソル代数における積\label{app-a2}}
テンソル代数の元には
和と非可換な乗算が定義されるので，非可換多項式からなる代数とみることができる．
すなわち，ベクトルに対するテンソル積を自然に拡張して，積を
\begin{align}
  \left(
  g^{(i_1\cdots i_{k})} e_{i_1} \otimes \cdots \otimes e_{i_{k}}
  \right)
  \otimes
  \left(
  h^{(j_1\cdots j_{l})}
  e_{j_1} \otimes \cdots \otimes e_{j_{l}}
  \right)
  &=
  g^{(i_1\cdots i_{k})}
  h^{(j_1\cdots j_{l})}
  e_{i_1} \otimes \cdots \otimes e_{i_{k}}\otimes
  e_{j_1} \otimes \cdots \otimes e_{j_{l}}
\end{align}
と定義する．
$k$階テンソルと$l$階テンソルとのテンソル積は$k+l$階テンソルになる．
このテンソル積はテンソル代数の元に双線型に拡張される．例えば，
$g,h\in T^2(\mathbb{R}^d)$のテンソル積は，$T^2(\mathbb{R}^d)$内では
以下のように算定される．
\begin{align}
  \label{tensor_prod}
  &\left(g^{()}+
  \sum_{i_1=1,d} g^{(i_1)}e_{i_1}+
  \sum_{i_1,i_2=1,d} g^{(i_1i_2)}e_{i_1}\otimes e_{i_2}\right)
  \otimes 
  \left(h^{()}+
  \sum_{i_1=1,d} h^{(i_1)}e_{i_1}+
  \sum_{i_1,i_2=1,d} h^{(i_1i_2)}e_{i_1}\otimes e_{i_2}\right)\\
  &=
  g^{()}h^{()}+
  \sum_{i_1=1,d} (g^{()}h^{(i_1)}+g^{(i_1)}h^{()})e_{i_1}+
  \sum_{i_1,i_2=1,d} (g^{()}h^{(i_1i_2)}+g^{(i_1)}h^{(i_2)}+g^{(i_1i_2)}h^{()})
  e_{i_1}\otimes e_{i_2}.
  \nonumber
\end{align}
$3$階以上のテンソルは切り捨てられていることに注意．

\subsection{シグネチャのテンソル表記\label{app-a4}}
以上の定義より，反復積分を次のように簡潔に表記できる．
\begin{align}
  \sum_{i_1\cdots i_n}\mathcal{S}^{(i_1\cdots i_n)}(X) e_{i_1\cdots i_n}
  &=
  \sum_{i_1\cdots i_n}
  \int_0^1  \int_0^{t_n} \cdots  \int_0^{t_2}
  dX^{(i_1)}_{t_1}    \cdots  dX^{(i_{n-1})}_{t_{n-1}}dX^{(i_n)}_{t_n}
  e_{i_1\cdots i_{n-1} i_n}\\
  &=\sum_{i_1\cdots i_n}
  \int_0^1  \int_0^{t_n} \cdots  \int_0^{t_2}
  dX^{(i_1)}_{t_1}e_{i_1}    \cdots  dX^{(i_{n-1})}_{t_{n-1}}e_{i_{n-1}}
  dX^{(i_n)}_{t_n}e_{i_{n}}\nonumber\\
  &=
  \int_{0<t_1<\cdots<t_n<1}
  dX_{t_1}\otimes   \cdots  \otimes dX_{t_n}.\nonumber
\end{align}
ここで，$(\mathbb{R}^d)^{\otimes n}$の基底を
簡単のため$e_{i_1\cdots i_n}:=e_{i_1}\otimes \cdots \otimes e_{i_n}$と書いた．
シグネチャは反復積分の直和なので，次のように書ける．
\begin{align}
  \mathcal{S}_n(X) &=
  \sum_{k=0}^{n}
  \sum_{i_1\cdots i_n}\mathcal{S}^{(i_1\cdots i_k)}(X) e_{i_1\cdots i_k}
    \label{sig_tensor}\\
    &=\sum_{k=0}^{n}
  \int_{0<t_1<\cdots<t_k<1}
  dX_{t_1}\otimes   \cdots  \otimes dX_{t_k}
  \in T^n(\mathbb{R}^d).\nonumber
\end{align}

\subsection{テンソル代数のノルム\label{app-a3}}
$g=\sum_{i_1,\cdots, i_k=1,\cdots,d} g^{(i_1\cdots i_k)}
  e_{i_1}\otimes \cdots \otimes e_{i_k} \in(\mathbb{R}^d)^{\otimes k}$に対して，
\begin{align}
  |g|_{\otimes k}&=\sqrt{\sum_{i_1,\cdots,i_k=1,\cdots,d}\left(g^{(i_1\cdots i_k)}\right)^2}
  \label{norm1}
\end{align}
とノルムを定義すると，
$g\in (\mathbb{R}^d)^{\otimes k},~
h\in (\mathbb{R}^d)^{\otimes l}$に対して，
\begin{align}
|g\otimes h|_{\otimes k+l}
&=
|g|_{\otimes k}|h|_{\otimes l}
  \label{prop_norm1}
\end{align}
が成り立つ\citep[chapter\,7 of][]{frizvictoir2010}．
さらに，ノルム(\ref{norm1})をもとに，
$g=g^0+\cdots+g^n\in T^n(\mathbb{R}^d),~g^k\in (\mathbb{R}^d)^{\otimes k}$
のノルムを
\begin{align}
\|g\| &= \max_{k=0,\cdots,n}|g^k|_{\otimes k}
  \label{norm2}
\end{align}
と定義することができる\citep[chapter\,7 of][]{frizvictoir2010}．

\subsection{シグネチャの線形汎関数\label{app-a5}}
$T^n(\mathbb{R}^d)$の双対空間$T^n(\mathbb{R}^d)^*$
の基底を$e^{i_1\cdots i_k}$と書き，
シグネチャの線型汎関数
\begin{align}
  w &=
  \sum_{k=0}^{n}
    \sum_{i_1\cdots i_k}w_{i_1\cdots i_k}(X) e^{i_1\cdots i_k}
    \in T^n(\mathbb{R}^d)^*
\end{align}
をとる．
これを，演算規則$
\left<e_{i_1i_2},e^{j_1j_2}\right>=\delta_{i_1}^{j_1}\delta_{i_2}^{j_2}$
などに従って，
シグネチャ(\ref{sig_tensor})に双線形に作用させると，
\begin{align}
\left<w,\mathcal{S}_n(X)\right>
&=
  \sum_{k=0}^{n}
  \sum_{i_1\cdots i_k}w_{i_1\cdots i_k}\mathcal{S}^{(i_1\cdots i_k)}(X)\in \mathbb{R}
  \label{pairing}
\end{align}
というスカラーが得られる．
これは経路の関数$X\mapsto
\left<w,\mathcal{S}_n(X)\right>$を定義している．

\begin{figure}
 \begin{center}
   \includegraphics[width=20em]{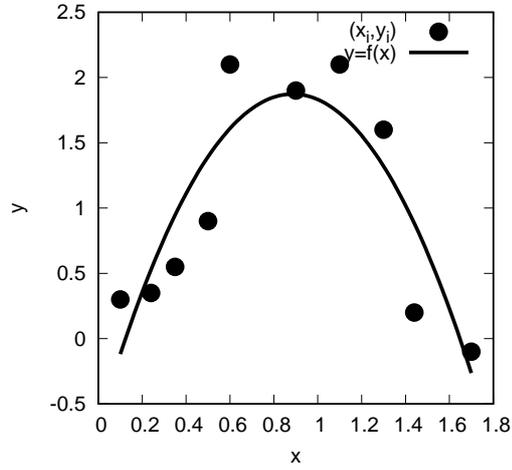}
  \caption{\label{fig:reg}多項式回帰$(y=w_0+w_1 x + w_2 x^2)$}
 \end{center}
\end{figure}

\begin{figure}
 \begin{center}
   \includegraphics[width=50em]{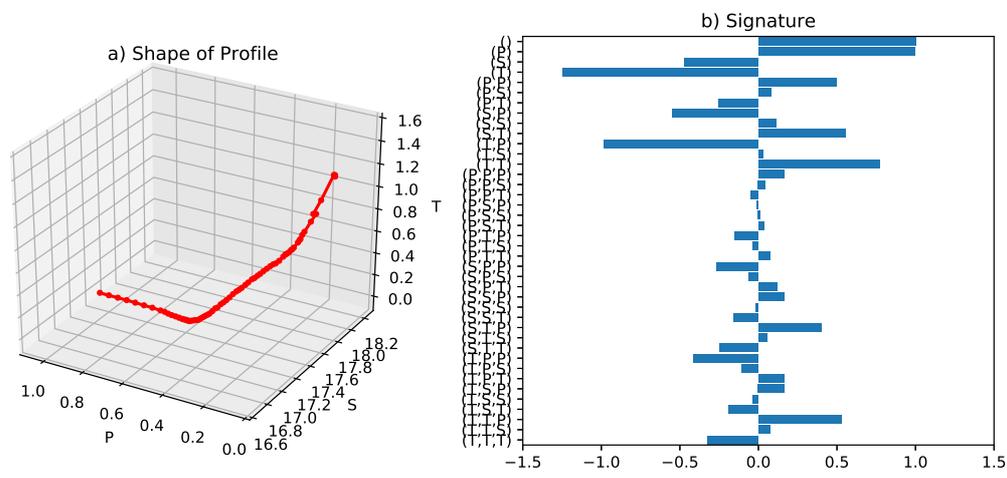}
   \caption{\label{fig:argo}Argoプロファイル(a)とそのシグネチャ(b)．
     圧力(P)・塩分(S)・水温(T)を規格化して，３次元空間上にプロファイル$X$を表示．
     シグネチャは，例えば多重添字$(PTS)$に対して$\mathcal{S}^{(PTS)}(X)$
     の値を表示．}
 \end{center}
\end{figure}
\begin{figure}
 \begin{center}
   \includegraphics[width=22em]{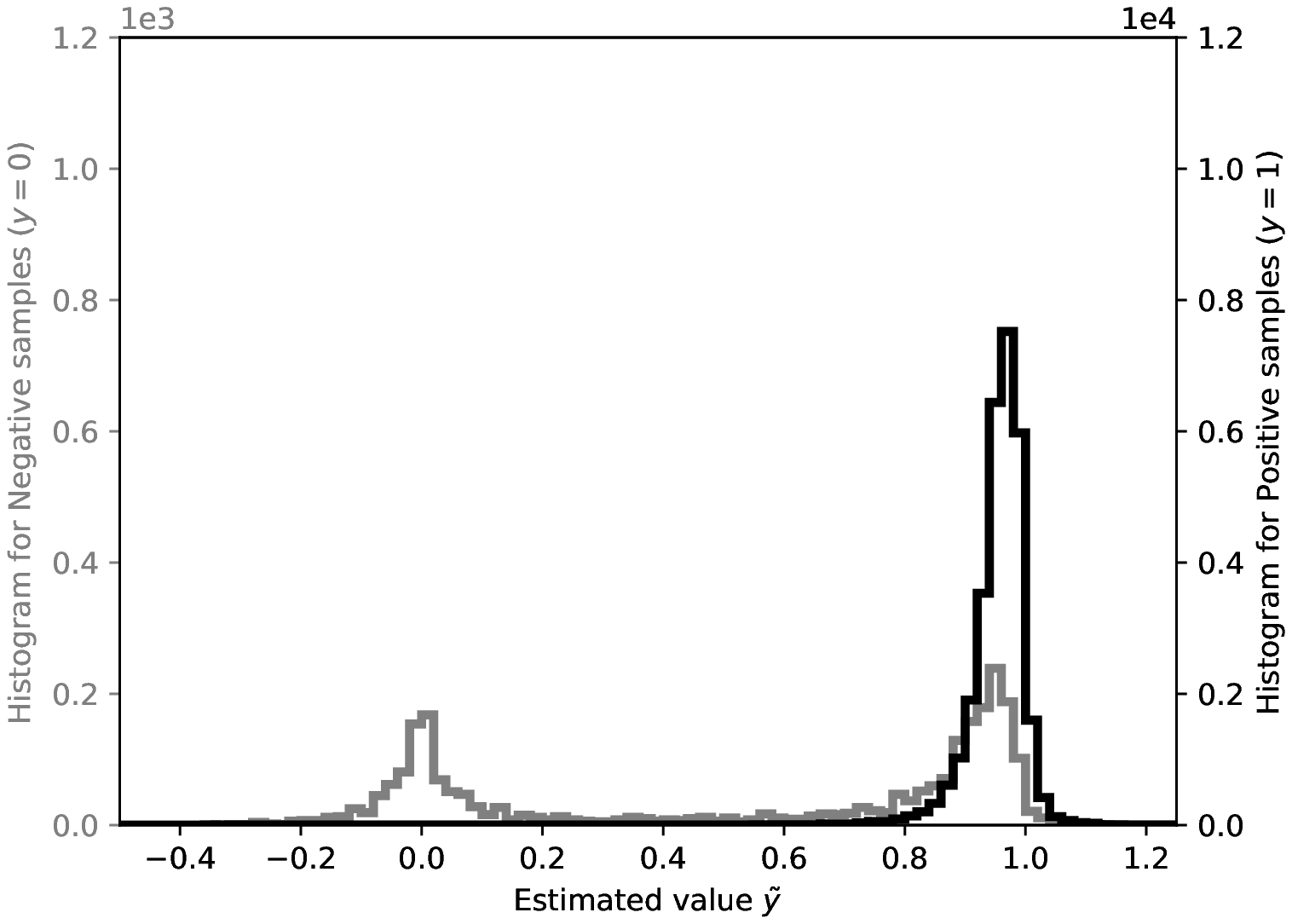}
   \includegraphics[width=22em]{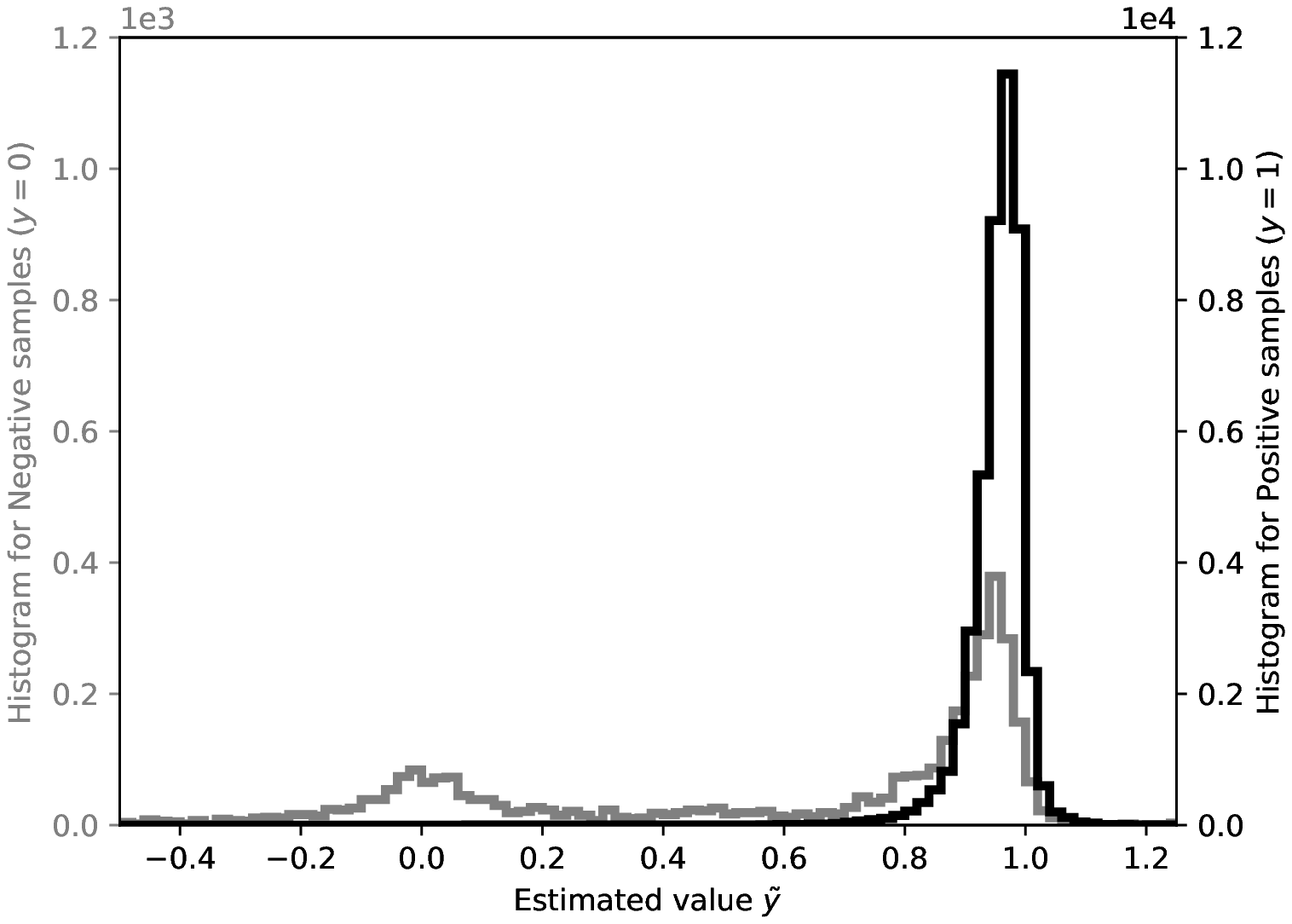}
   \caption{\label{fig:hist}訓練データ（左）と検証データ（右）に対する予測値（横軸）
     のビン毎の度数（縦軸）．
     黒は$y=1$，灰色は$y=0$のデータに対する予測値の度数分布を表し，
     前者の度数目盛りを右軸に，後者を左軸に示す．}
\end{center}
\end{figure}

\bibliographystyle{pism}
\bibliography{myrefs}
\end{CJK}
\end{document}